\renewcommand\@biblabel[1]{#1.}
\newtheorem{theorem}{Theorem}
\newtheorem{proposition}{Proposition}
\newtheorem{definition}{Definition}
\newcommand{\abs}[1]{\lvert#1\rvert}
\begin{document}

\allowdisplaybreaks

\thispagestyle{plain}

\title[Holomorphic Path Integrals in Tangent Space for Flat Manifolds]{Holomorphic Path Integrals in Tangent Space for Flat Manifolds}
\keywords{Holomorphic quantization, path integrals, reproducing kernel Hilbert spaces}
\author[G. Capobianco and W. Reartes]{Guillermo Capobianco \and Walter Reartes}
\address{Departamento de Matem\'atica, INMABB CONICET, Universidad Nacional del Sur, Av. Alem 1253, 8000 Bah\'ia Blanca, Buenos Aires, Argentina}
\email{\href{mailto:guillermo.capobianco@gmail.com}{guillermo.capobianco@gmail.com},\href{mailto:walter.reartes@gmail.com}{walter.reartes@gmail.com}}
\date{\today}

\subjclass[2010]{53Z05, 81S40}

\begin{abstract}
In this paper we study the quantum evolution in a f\/lat Riemannian manifold. The holomorphic functions are def\/ined on the cotangent bundle of this manifold. We construct Hilbert spaces of holomorphic functions in which the scalar product is def\/ined using the exponential map.
The quantum evolution is proposed by means of an inf\/initesimal propagator and the holomorphic Feynman integral is developed via the exponential map. The integration corresponding to each step of the Feynman integral is performed in the tangent space. 
Moreover, in the case of $S^1$, the method proposed in this paper naturally takes into account paths that must be included in the development of the corresponding Feynman integral.
\end{abstract}

\maketitle

\tableofcontents

\section{Introduction}

The quantization of Riemannian manifolds is a very important topic in quantum mechanics in non-trivial spaces \cite{kleinert04,kobayashi69-2,kowalski-rembielinski00,kowalski-et-al98,mostafazadeh96}. From a geometrical point of view, the quantization in the phase space is natural, mainly because of its relation to the geometric quantization. The geometric quantization is strongly linked to the symplectic structure of the phase space \cite{woodhouse}.

Of particular interest is the case of cotangent bundles of Riemannian manifolds. The cotangent bundle carries a natural Riemannian structure which is the lifting of the structure at the base. It also has a compatible almost-complex structure $J$. This almost-complex structure gives rise to a complex structure if and only if the manifold is f\/lat \cite{gorbunov-et-al05}. This is the case we deal with in this paper.

We propose a quantization method for a f\/lat Riemannian manifold using the complex structure of the cotangent bundle to def\/ine Hilbert spaces of holomorphic functions def\/ined in the cotangent. To def\/ine the scalar product in this Hilbert space we make use of the exponential map, which allows us to perform an integration in the tangent space of the cotangent. The procedure is as follows. We assign to each point of the cotangent a Hilbert space by def\/ining a scalar product by integration in the tangent space using a Gaussian measure. These spaces are naturally isomorphic. With this product, each f\/iber is a Segal-Bargmann space for locally def\/ined holomorphic functions. The space of globally def\/ined holomorphic functions is a subspace of the former one. The space we get is a reproducing kernel Hilbert space (RKHS), like that of Segal-Bargmann. The existence of a reproducing kernel allows us to study the temporal evolution of the quantum wave function by means of Feynman integrals.
Moreover, we develop a method that takes into account all the paths that appear in the Feynman integral. 

In \cite{schulman}, the author studies how to propose a Feynman integral in non-simply connected conf\/iguration spaces that consider all the paths. For example, in the case of $S^1$ \cite{schulman}, the Green function obtained by studying a free particle, whose Lagrangian is $L=\frac{1}{2}\dot{\theta}^2$, is given by the following expression
\begin{equation}
G(\theta,T,\theta_0)=\frac{1}{\sqrt{2\pi iT}}e^{i\frac{(\theta-\theta_0)^2}{2T}}.
\end{equation}
However, it does not turn out to be the Green function for $S^1$, 
considering that the correct Green function must take into account all the paths, that is to say
\begin{equation}
G(\theta,T,\theta_0)=\sum_{\mathbb{Z}}G_n(\theta,T,\theta_0),
\end{equation}   
where
\begin{equation}
G_n(\theta,T,\theta_0)=\frac{1}{\sqrt{2\pi i T}}e^{i\frac{(\theta-\theta_0+2\pi n)^2}{2T}},
\end{equation}
Then, a Green function is f\/inally obtained which coincides with that resulting from the inf\/initesimal propagator proposed in \cite{reartes04}. In the construction of this propagator, the integration is performed in the tangent space of $S^1.$

In this paper we do the integration in the tangent space of the cotangent and therefore, for the aforementioned cases, all the paths are automatically considered.
\section{Quantization of the phase space}

\subsection{Metric and complex structure}\label{metricalevantada} 

In the case of a Riemannian manifold $Q$, as a conf\/iguration space, there is a metric in the cotangent space that is the natural lifting of the metric in $Q$. Also an almost-complex 
structure is obtained, which is compatible with the symplectic structure of the cotangent bundle. This structure is integrable when the curvature 
of  $Q$ is null \cite{gorbunov-et-al05, kobayashi69-2}.

The metric in $Q$ can be naturally lifted to the cotangent space $P$ as follows. (See
the details in the Appendix of \cite{capobiancoreartes2}).

We denote by $\sigma$ the metric on $Q$, and by $\sigma^\sharp$ the isomorphism induced by $\sigma$ between $P=T^*Q$ and $TQ$.
Then, we have a metric $G$ on $P$ given by
\begin{equation}\label{metricG}
G_m(V,W)=\sigma_q\left(T\pi V,T\pi W\right)+\sigma_q\left(\sigma^\sharp_q\frac{Dp_1}{\mathrm{d} t}(0),\sigma^\sharp_q\frac{Dp_2}{\mathrm{d} t}(0)\right),
\end{equation}
where $T\pi$ is the tangent application of the projection $\pi\colon P\to Q$, and $D$ is the covariant derivative.

A third structure appears naturally, the almost-complex structure~$J$. The triple $(J,G,\omega)$ is known as a compatible triple.  
That is, given the f\/ields~$V$ and~$W$, they verify $G(V,W)=\omega(V,JW)$.

Complexifying the tangent space $T_mP$, we have the space $T_mP \otimes \mathbb{C}$. 
The map $J_m$ can be extended naturally, and we have the decomposition
$T_mP\otimes \mathbb{C}= T_m^{(1,0)}P\oplus T_m^{(0,1)}P$ for all $m \in P$ which can be extended to the whole f\/iber bundle $TP$. Then,
the complexif\/ied tangent bundle, $T^\mathbb{C}P$, splits as follows (see \cite{kobayashi69-2})
\begin{gather*}
T^\mathbb{C}P=T^{(1,0)}P\oplus T^{(0,1)}P,
\end{gather*}
where $T^{(1,0)}P$ and $T^{(0,1)}P$ are the images of the projections $\Pi^+$ and $\Pi^-$ given by
$\Pi^\pm = \frac{1 \mp \mathrm{i} J}{2}$.

The projection $\Pi^+$ is a natural isomorphism between $T_mP$ and 
$T^{(1,0)}_mP$ ({\it holomorphic tangent space}). That is, given a vector $V=(\dot{q}^1,\ldots,\dot{q}^n,\dot{p}_1,\ldots,\dot{p}_n)\in T_mP$,
$\Pi^+V=\dot{z}^i\frac{\partial}{\partial z^i},$
where the induced complex coordinates are
\begin{gather}\label{coordenadascomplejas}
\dot{z}^i=\dot{q}^i+\mathrm{i}\sigma^{im}(\dot{p}_m-p_k\Gamma_{ml}^k\dot{q}^l),
\end{gather}
and the corresponding holomorphic vector f\/ields are
\begin{gather*}
\frac{\partial}{\partial z^i} = \frac{1}{2}\left(\frac{\partial}{\partial q^i} + p_k\Gamma^k_{ij}\frac{\partial}{\partial p^j} - \mathrm{i}\sigma_{ij}\frac{\partial}{\partial p^j}\right),
\end{gather*}
where $\sigma_{ij}$ and $\Gamma^k_{ij}$ are the matrix coef\/f\/icients of the metric and the Christof\/fel symbols respectively (the Einstein summation convention is used).
Computing the Lie bracket of the above-mentioned f\/ields, we obtain
\begin{gather*}
\left[\frac{\partial}{\partial z^i},\frac{\partial}{\partial z^j}\right] = \mathrm{i} R^m_{kij}p_m\sigma^{lk}\left(\frac{\partial}{\partial z^l}-\frac{\partial}{\partial\bar{z}^l}\right).
\end{gather*}
Then, by the Nirenberg-Newlander theorem \cite{kobayashi69-2}, the distribution is integrable if and only if the curvature tensor of the metric~$\sigma$ is identically null, see \cite{gorbunov-et-al05}.
That is, the base manifold is f\/lat.  
In this paper we assume that the conf\/iguration space is a f\/lat manifold. 

Recently, the research on f\/lat Riemannian manifolds (known as {\em euclidean space forms} \cite{wolf06,kuhnel06}) has shown many developments. 
For instance, its spectral properties are studied in detail in \cite{miatellopodesta06,miatellorossetti09}.
In cosmology, the {\em euclidean space forms} are used to model the spatial part of f\/lat universe models \cite{ellis71,levin02,levin98,levin98_2,levin98_3}.
And more recently a quantization procedure for {\em euclidean space forms} based on path integrals is developed in \cite{capobiancoreartes2}.

\subsection{Flat Riemannian manifolds}

The following theorem by W.~Killing and H.~Hopf~ characterizes f\/lat Riemmanian manifolds (it is a part of a~more general theorem \cite{wolf06}).

\begin{theorem}
Let $M$ be a Riemannian manifold of dimension $n\geq2$ and zero curvature. Then~$M$ is complete and connected if and only if it is isometric to the quotient $\mathbb{R}^n/\Gamma$ with $\Gamma \subset E(n)$, where~$\Gamma$ acts freely and properly discontinuously.
\end{theorem}

These manifolds are known as Euclidean space forms and $E(n)$ is the Euclidean group (semidirect product of the groups~$O(n)$ and $\mathbb{R}^n$). 

In one dimension these manifolds are the real line $\mathbb{R}$ and the circle $S^1$ while in dimension~$2$ there are f\/ive manifolds, the plane~$\mathbb{R}^2$, the cylinder, the inf\/inite M\"oebius strip, the torus and the Klein bottle. In dimension $3$ there are $18$ types,  $10$ of which are compact, $6$~orientable, and $4$ non-orientable~\cite{kuhnel06,wolf06}. In higher dimensions the number grows signif\/icantly. 

Every homogeneous Riemannian manifold is dif\/feomorphic to some Lie group but in general a space form is not necessarily homogeneous. 
In particular, when the space form is homogeneous of dimension~$n$, then it is isometric to the product $\mathbb{R}^m\times T^{n-m}$ of a Euclidean space with a f\/lat 
Riemannian torus \cite[p.~88]{wolf06}.

This paper focuses on orientable compact f\/lat manifolds. 
Compact f\/lat Riemannian manifolds of dimension $n$ are quotients of polyhedra in~$\mathbb{R}^n$ by identifying faces (see \cite[p.~99]{wolf06}). 
%. 
Functions def\/ined on the manifold are functions on~$\mathbb{R}^n$, which are invariant under the action of the group.

An important invariant for a compact Euclidean space form is its volume. This can be def\/ined in terms of a fundamental region for~$\Gamma$ 
in~$\mathbb{R}^n$~\cite{mcmullen02}.
The volume of a space form $\mathbb{R}^n/\Gamma$ is def\/ined to be the volume of any
fundamental region,
$c_\gamma$, 
\begin{gather*}
c_\gamma:=\big\{x\in \mathbb{R}^n; \|\gamma(0)-x\|\leq\|\gamma'(0)-x\|\,\text{for every}\, \gamma' \in \Gamma\big\},
\end{gather*}
where $\gamma (0)$ is the action of $\gamma$ on $0\in\mathbb{R}^n$. $c_\gamma$ is an $n$-dimensional convex polyhedron in $\mathbb{R}^n$ bounded by hyperplanes which are perpendicular bisectors of line segments $[\gamma(0), \gamma'(0)]$. Its bounda\-ry~$\partial c_\gamma$ carries a locally f\/inite decomposition into convex polyhedra of dimension $n-1$. The space form~$\mathbb{R}^n/\Gamma$ is then obtained from~$c_\gamma$ by identifying points in~$\partial c_\gamma$ which are equivalent modulo~$\Gamma$.

The family $\{c_\gamma\}$ forms a crystalline structure (see \cite[p.~100]{wolf06}), 
that can be generated by translation of a f\/inite set of vectors def\/ining the crystal lattice. This set forms a basis of~$\mathbb{R}^n$. 
Dual basis vectors multiplied by $2\pi$ are the basis of the reciprocal lattice,~$\mathcal{L}$. Let~$K$ be an element of the reciprocal lattice,
then a function with the symmetry of this lattice has a Fourier expansion given by
\begin{gather}\label{f}
f(x)=\sum_{K\in\mathcal{L}} c_K \mathrm{e}^{\mathrm{i}  K\cdot x}.
\end{gather}
This function is well def\/ined on the manifold if it is also invariant under the action of $\Gamma$, i.e., $(\gamma f)(x)=f(\gamma x)=f(x)$ for all $\gamma\in\Gamma$.

The $3$-dimensional orientable compact Euclidean space forms present a~parti\-cu\-lar interest for cosmology, 
since they could model the spatial part of the f\/lat-universe models~\cite{ellis71}.
Recently J.~Levin et al. seek to develop 
a plausible cosmological model using orientable compact Euclidean space forms of 
dimension $3$ in agreement with results of observations made on the cosmic 
microwave background radiation~\cite{levin02, levin98,levin98_2,levin98_3}.

\subsection{The Hilbert space}
The cotangent bundle $P$ with the metric $G$ is a complete Riemannian manifold. 
Then, by the theorem of Hopf-Rinow, it is geodesically complete, that is, for all $m\in P$ there exists a well def\/ined exponential map on the entire tangent space $T_mP.$

Here, the tangent vectors are represented by its complex coordinates (\ref{coordenadascomplejas}) using the projection $\Pi^+$.
The square modulus $|z|^2$ is given by
\begin{equation}
|z|^2=G_m(z,z)=\sigma_{ij}(\pi m)z^i\bar{z}^j,
\end{equation}
where $\sigma_{ij}$ are the components of the matrix corresponding to the metric in the base manifold $Q$.

The natural volume in the vectorial space $T_mP$ is given by the {\em pull-back} of the Riemannian metric by the exponential map (which coincides with the Liouville volume). 

\begin{proposition}
The pullback of the Riemannian volume by the exponential map in $m$ is given by
\begin{equation}
\exp_m^*\Omega(z) = \sigma(\pi m)\,dz.
\end{equation}
where $\sigma$ is the determinant of the metric, and $dz$ is the Lebesgue measure in $\mathbb{C}^n.$
\end{proposition}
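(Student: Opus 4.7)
\medskip

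\noindent\textbf{Proof proposal.} The strategy is to exploit the flatness of $P$. Since $Q$ is flat, the lifted metric $G$ on $P$ has vanishing curvature (this is the content of the Nirenberg--Newlander discussion above). By Hopf--Rinow, $\exp_m\colon T_mP\to P$ is globally defined, and flatness implies it is a local isometry from the Euclidean space $(T_mP,G_m)$ to $(P,G)$. Consequently $\exp_m^*\Omega$ coincides with the Euclidean volume form $\Omega_m$ of the Gram matrix $G_m$ read on $T_mP$, and the whole problem reduces to writing $\Omega_m$ in the complex coordinates~(\ref{coordenadascomplejas}).

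First I would compute $\Omega_m$ in the natural real coordinates $(\dot q^i,\dot p_i)$ on $T_mP$ by diagonalizing $G_m$ through the horizontal/vertical splitting. Setting $e_i=\partial_{\dot q^i}+p_k\Gamma^k_{ij}(q)\,\partial_{\dot p_j}$ and $f_i=\partial_{\dot p_i}$, one has $T\pi(e_i)=\partial_{q^i}$ while $Dp/\d t=0$ along the curve $e_i$, whereas $T\pi(f_i)=0$ and $\sigma^\sharp(Dp/\d t)=\sigma^{ij}\partial_{q^j}$. Substituting in~(\ref{metricG}) gives the block-diagonal Gram matrix $\operatorname{diag}(\sigma_{ij},\sigma^{ij})$, whose determinant equals $1$. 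Since the change of basis $(\partial_{\dot q^i},\partial_{\dot p_i})\mapsto(e_i,f_i)$ is unipotent, the same determinant $1$ holds in the coordinate basis. Hence $\Omega_m=\d\dot q^1\cdots\d\dot q^n\,\d\dot p_1\cdots\d\dot p_n$ (which also matches the Liouville volume, as announced in the paper).

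Finally, I would perform the linear change of variables from $(\dot q,\dot p)$ to $(x,y)=(\Re z,\Im z)$. From~(\ref{coordenadascomplejas}), $x^i=\dot q^i$ and $y^i=\sigma^{im}(\dot p_m-p_k\Gamma^k_{ml}\dot q^l)$; inverting yields $\dot p_m=\sigma_{mi}y^i+p_k\Gamma^k_{ml}x^l$. Thus the Jacobian of $(x,y)\mapsto(\dot q,\dot p)$ is block lower-triangular with diagonal blocks $I$ and $(\sigma_{mi})$, and its determinant is $\sigma(\pi m)$. Therefore $\d\dot q\,\d\dot p=\sigma(\pi m)\,\d x\,\d y=\sigma(\pi m)\,dz$, which is the claim.

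The only subtle point is the identification $\det G=1$ in the canonical coordinates; handled head-on, the off-diagonal terms coupling $\dot q$ and $\dot p$ through $p_k\Gamma^k_{ij}$ make the direct determinant computation unpleasant, so the horizontal/vertical frame $(e_i,f_i)$ is essential to keep the calculation clean. Once that is in place, the reduction to $\sigma(\pi m)\,dz$ is an elementary Jacobian, and the flatness hypothesis takes care of the pullback by $\exp_m$.
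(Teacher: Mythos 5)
Your proof is correct. Note that the paper states this proposition without any proof, so there is nothing to compare it against; your argument supplies the missing justification, and its two-step structure is the right one: (i) flatness of $(P,G)$ makes $\exp_m$ a local isometry, so $\exp_m^*\Omega$ is the constant-coefficient volume form of $G_m$ on $T_mP$, and (ii) the unipotent horizontal/vertical change of frame gives $\det G_m=\det(\sigma_{ij})\det(\sigma^{ij})=1$ in the coordinates $(\dot q,\dot p)$, after which the linear substitution $\dot p_m=\sigma_{mi}y^i+p_k\Gamma^k_{ml}x^l$ contributes exactly the Jacobian factor $\sigma(\pi m)$, consistent with $\abs{z}^2=\sigma_{ij}z^i\bar z^j$. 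One small correction: the flatness of the lifted metric $G$ is not literally ``the content of the Nirenberg--Newlander discussion,'' which concerns the integrability of the almost-complex structure $J$, not the curvature of $G$. The fact you actually need is that the Sasaki-type lift of a flat metric is flat; in the present setting this is immediate by working in local coordinates on $Q$ in which $\sigma_{ij}$ is constant and $\Gamma^k_{ij}=0$, where $G=\sigma_{ij}\,\d q^i\,\d q^j+\sigma^{ij}\,\d p_i\,\d p_j$ is manifestly Euclidean, and completeness then makes $\exp_m$ a Riemannian covering. With that reference fixed, the argument is complete.
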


We use the exponential map in order to def\/ine a product of holomorphic functions as follows. 

\begin{definition}
 Let $m \in P$, then for holomorphic functions $\phi$ and $\psi$ def\/ined 
on $P$ we def\/ine the following normalized scalar product 
\begin{equation}\label{productoescalar2}
\left<\phi,\psi\right> = \int_{T_mP}\overline{\phi(\exp_m z)}\,\psi(\exp_m z)\,e^{-|z|^2}\,dz.
\end{equation}
\end{definition}

Even though the def\/inition depends on the point $m$, the dif\/ferent products are isometric to each other.
In the following theorem we demonstrate that all these elements constitute a Hilbert space of holomorphic functions.
\begin{theorem}\label{hilbert} 
Let $Q$ be a f\/lat Riemannian manifold, connected and geodesically complete. We assume the surjectivity of the exponential map.
Then, the space of square integrable holomorphic functions on $P$ is a Hilbert space with the norm associated to the scalar product (\ref{productoescalar2}). We call this Hilbert space $\mathcal{B}_{P}.$
\end{theorem}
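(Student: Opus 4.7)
The plan is to verify that the sesquilinear form (\ref{productoescalar2}) is positive definite on holomorphic functions and then to prove completeness by the standard reproducing-kernel argument, treating each fiber $T_mP\cong\mathbb{C}^n$ as a Segal--Bargmann space.

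First I would check the algebraic conditions. Sesquilinearity and conjugate symmetry are immediate. For positive definiteness, if $\langle\phi,\phi\rangle=0$ then the continuous function $\phi\circ\exp_m$ vanishes almost everywhere on $T_mP$, hence identically; the assumed surjectivity of $\exp_m$ then forces $\phi\equiv 0$ on $P$. The remark following Definition 1 on the base-point independence of the product must also be justified at this stage: because $Q$ is flat and geodesically complete, parallel transport from $\pi(m)$ to $\pi(m')$ extends to a complex-linear isometry $T_mP\to T_{m'}P$ that preserves both the squared modulus $|z|^2$ and the Lebesgue measure $dz$, so after a change of variables one finds $\langle\phi,\psi\rangle_m=\langle\phi,\psi\rangle_{m'}$. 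This makes $\mathcal{B}_P$ and its norm unambiguously defined.

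The second ingredient is the pointwise estimate
\[
|\phi(m')|^2 \;\leq\; C_{m'}\,\|\phi\|^2,
\]
with $C_{m'}$ locally bounded in $m'$. For $m'=m$ this is exactly the reproducing property of the classical Segal--Bargmann space on $\mathbb{C}^n$: expanding $\phi\circ\exp_m$ in a power series at the origin of $T_mP$ and integrating term by term against $e^{-|z|^2}\,dz$ yields $|\phi(m)|^2$ up to a known normalization. For a general $m'$ one applies the base-point isometry just established, or argues directly using the sub-mean-value inequality for the plurisubharmonic function $|\phi|^2$ on a small ball in $T_{m'}P$, where $\exp_{m'}$ is a local biholomorphism.

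Completeness is then routine. If $\{\phi_n\}\subset\mathcal{B}_P$ is Cauchy in the norm, the pointwise estimate shows that $\phi_n(m')$ converges in $\mathbb{C}$ for every $m'\in P$, and the local version of the bound upgrades this to uniform convergence on compact subsets of $P$; the pointwise limit $\phi$ is therefore holomorphic. Fatou's lemma applied to $|\phi-\phi_n|^2e^{-|z|^2}$ on $T_mP$ yields $\|\phi-\phi_n\|\to 0$, so $\phi\in\mathcal{B}_P$ and $\phi_n\to\phi$ in norm.

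The step I expect to be most delicate is the base-point independence, which is genuinely global: surjectivity of $\exp_m$ guarantees that the integral sees all of $P$, but typically with multiplicity, and one must verify that these multiplicities combine with the Gaussian factor to produce the same value as at $m'$. Flatness is decisive here, as it makes $\exp_m$ a local isometry with no Jacobian distortion and reduces the identification to parallel transport of the reference frame, which is exactly the structure used to lift the metric in (\ref{metricG}) and the complex coordinates in (\ref{coordenadascomplejas}).
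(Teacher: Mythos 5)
Your proof is correct and follows essentially the same route as the paper: pull back to the Segal--Bargmann space $\mathcal{H}L^2(T_mP,e^{-|z|^2}\,dz)$ and use the continuity of point evaluations there to turn norm-Cauchy into pointwise (locally uniform) convergence, so that the limit is holomorphic and descends to $P$. The differences are cosmetic---you define the limit directly on $P$ and close with Fatou's lemma rather than invoking completeness of $\mathcal{B}$ and checking constancy on the fibers of $\exp_m$---and you additionally supply the positive-definiteness and base-point-independence checks that the paper leaves implicit.
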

\begin{proof}
A holomorphic function $\phi : P \rightarrow \mathbb C$ induces (by {\em pull-back}) a holomorphic function on $T_mP,$ that is $\tilde{\phi}= \phi \circ \exp_m $. The set of all holomorphic functions on $T_mP$ whit the scalar product (\ref{productoescalar2}) is a Hilbert space $\mathcal{B}=\mathcal{H}L^2(\mathbb{C}^n,e^{-|z|^2})\subset L^2(\mathbb{C}^n,e^{-|z|^2})$ (see \cite{hall00a}). Let $\{\phi_i\}_{i=1}^\infty$ be a Cauchy sequence of holomorphic functions on $P$. It induces a new sequence $\{\tilde\phi_i\}_{i=1}^\infty$ which converges to a holomorphic function $\tilde{\phi}$ on $\mathcal{B}.$ One of the key properties of these spaces of holomorphic functions is the continuity of the evaluation map, i.\ e., for all $z$ exists a constant $M_z$ such that for all $\tilde{\phi}$ it is verif\/ied that
\begin{equation}
|\tilde{\phi}(z)|^2\leq M_z \|\tilde{\phi}\|_{L^2}^2.
\end{equation}
Then, 
\begin{equation}\label{anterior2}
|\tilde{\phi}_i(z)-\tilde{\phi}(z)|^2\leq M_z \|\tilde{\phi}_i - \tilde{\phi}\|_{L^2}^2 \rightarrow 0.
\end{equation}
Then for all $n\in P$ we def\/ine $\phi(n)=\tilde{\phi}(\exp_m^{-1}n).$ The function $\phi$ is well def\/ined. Indeed, if $z_1, z_2\in \exp_m^{-1}n$, we have $\tilde{\phi}_i(z_1)=\tilde{\phi}_i(z_2),$ 
then by (\ref{anterior2}), $\tilde\phi(z_1)=\tilde\phi(z_2).$ 
\end{proof}

We can consider $\mathcal{B}_{P}$ as a subspace of $\mathcal{B}$ by means of the exponential map. In $\mathcal{B}$ exists a function called reproducing kernel $K(z,w)$ (see \cite{hall00a}) with the following property 
\begin{equation}\label{rkp}
\tilde{\phi}(z)=\int_{\mathbb{C}^n}K(z,w)\,\tilde{\phi}(w)\,e^{-|w|^2}dw
\end{equation}
for all $\tilde{\phi} \in \mathcal{B}$.
Also $K(w,z)= \overline{K(z,w)}$ and satisf\/ies the following composition rule

 \begin{equation}
K(z,u)=\int_{\mathbb{C}^n}K(z,w)\,K(w,u)\,e^{-|w|^2}dw
\end{equation}
for all $z,u \in \mathbb{C}^n $. 
Also for all $z \in \mathbb{C}^n$, $\;|\tilde{\phi}(z)|^2 \leq K(z,z)\; \|\tilde{\phi}\|^2$ 
($K(z,z)$ is optimal, that is, for each $z$, there exists a non-zero $\tilde{\phi}_z \in \mathcal{B}$ for which equality holds).
Moreover, the reproducing kernel acts as a projector, i.\ e., if $\tilde{\phi} \in L^2(\mathbb{C}^n,e^{-|z|^2})$,
and calling $\mathcal{P}\tilde{\phi}$ to the orthogonal projection of $ \tilde{\phi}$ onto $\mathcal{B}$, then
\begin{equation}
\mathcal{P}\tilde{\phi}(z)=\int_{\mathbb{C}^n}K(z,w)\,\tilde{\phi}(w)\,e^{-|w|^2}dw.
\end{equation}

The reproducing kernel $K(z,w)$ is unique in the following sense. Given any $z \in \mathbb{C}^n$, if $F_z(\;\cdot\;) \in \mathcal{B} $ satisf\/ies
\begin{equation}
\tilde{\phi}(z)=\int_{\mathbb{C}^n}\overline{F_z(w)}\,\tilde{\phi}(w)\,e^{-|w|^2}dw
\end{equation}
for all $\tilde{\phi} \in \mathcal{B}$, then $\overline{F_z(w)}=K(z,w)$.

But in general the reproducing kernel $K(z,w)$ is not the pull-back of a function on $P$.

A reproducing kernel in the space $\mathcal{H}_P$, is a function
\begin{equation}
K_{P}\colon P\times P\to \mathbb{C},
\end{equation}
which is holomorphic in the f\/irst coordinate and antiholomorphic in the second, and such that the following equation holds for all $\phi\in\mathcal{B}_{P}$
\begin{equation}
\phi(n)=\int_{T_mP}K_{P}(n,\exp_m w)\,\phi(\exp_m w)\,e^{-|w|^2}dw.
\end{equation}
The following theorem allows us to obtain the reproducing kernel 

\begin{theorem} Let $\{e_j\}_{j=1}^\infty$ be an orthonormal basis for $\mathcal{B}_{P}$, then for $m,$ $n \in P$ 
\begin{equation}
 \sum_{j=1}^\infty |e_j(m)\overline{e_j(n)}| < \infty
\end{equation}
and
\begin{equation}
 K_{P}(m,n)=\sum_{j=1}^\infty e_j(m)\overline{e_j(n)}.
\end{equation}
\end{theorem}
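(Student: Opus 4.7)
My plan is to use standard reproducing kernel Hilbert space theory, applied to $\mathcal{B}_P$. The central inputs are (i) the continuity of evaluation functionals on $\mathcal{B}_P$, inherited from the corresponding property of $\mathcal{B}$ via pull-back by $\exp_m$, and (ii) the Riesz representation theorem, together with Parseval's identity in the orthonormal basis $\{e_j\}$.

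First, I would argue that for each $n\in P$ the evaluation functional $L_n\colon\phi\mapsto\phi(n)$ is bounded on $\mathcal{B}_P$. Indeed, given $n\in P$, pick any preimage $z_n\in\exp_m^{-1}(n)$; then $\phi(n)=\tilde{\phi}(z_n)$ where $\tilde{\phi}=\phi\circ\exp_m\in\mathcal{B}$, and the estimate $|\tilde{\phi}(z_n)|^2\le M_{z_n}\|\tilde{\phi}\|^2_{L^2}$ recalled in the proof of Theorem~\ref{hilbert} gives $|\phi(n)|^2\le M_{z_n}\|\phi\|^2_{\mathcal{B}_P}$. By Riesz's representation theorem there is then a unique $K_n\in\mathcal{B}_P$ with $\phi(n)=\langle K_n,\phi\rangle$ for every $\phi\in\mathcal{B}_P$. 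I would define $K_P(m,n):=K_n(m)$; the holomorphy in the first variable is immediate since $K_n\in\mathcal{B}_P$, and antiholomorphy in the second follows from the conjugate-linearity of the pairing $\phi\mapsto\langle K_n,\phi\rangle$ in its first slot (a standard check, using polarization if needed).

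Next, I would expand $K_n$ in the orthonormal basis: $K_n=\sum_j \langle e_j,K_n\rangle\, e_j$, where $\langle e_j,K_n\rangle=\overline{\langle K_n,e_j\rangle}=\overline{e_j(n)}$ by the reproducing property applied to $\phi=e_j$. Evaluating at $m$ and using continuity of $L_m$ to interchange the evaluation with the series yields
\begin{equation}
K_P(m,n)=K_n(m)=\sum_{j=1}^\infty e_j(m)\,\overline{e_j(n)}.
\end{equation}
For the absolute convergence, observe that Parseval gives $\sum_j|e_j(m)|^2=\sum_j|\langle K_m,e_j\rangle|^2=\|K_m\|^2=K_P(m,m)<\infty$, and likewise for $n$. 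The Cauchy--Schwarz inequality in $\ell^2$ then produces
\begin{equation}
\sum_{j=1}^\infty|e_j(m)\,\overline{e_j(n)}|\le \left(\sum_{j=1}^\infty|e_j(m)|^2\right)^{1/2}\!\!\left(\sum_{j=1}^\infty|e_j(n)|^2\right)^{1/2}=\sqrt{K_P(m,m)\,K_P(n,n)}<\infty.
\end{equation}

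The only non-routine step is the initial one: justifying that the bound on $\tilde\phi$ transfers cleanly to a bound on $\phi$, despite the exponential map not being injective in general (e.g.\ on a torus). This is harmless here because, as already noted in the proof of Theorem~\ref{hilbert}, any two preimages $z_1,z_2\in\exp_m^{-1}(n)$ satisfy $\tilde{\phi}(z_1)=\tilde{\phi}(z_2)$, so the constant $M_{z_n}$ can be taken independent of the chosen preimage. Once this is in place, the rest of the argument is a direct transcription of the Segal--Bargmann RKHS construction to $\mathcal{B}_P$.
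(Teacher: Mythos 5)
Your proof is correct, but it takes a genuinely different route from the paper's. The paper works extrinsically: it regards $\mathcal{B}_P$ as a closed subspace of the ambient Segal--Bargmann space $\mathcal{B}$, takes an orthonormal basis of $\mathcal{B}$ compatible with the decomposition $\mathcal{B}=\mathcal{B}_P\oplus\mathcal{B}_P^{\bot}$, splits the known kernel of $\mathcal{B}$ as $K=\tilde{K}_P+\tilde{K}_P^{\bot}$, and observes that the $\tilde{K}_P^{\bot}$ piece integrates to zero against any $\phi\in\mathcal{B}_P$, so the $\mathcal{B}_P$-part of the ambient kernel already reproduces. You instead argue intrinsically: bounded point evaluations on $\mathcal{B}_P$ (pulled back through $\exp_m$, with the well-definedness caveat for non-injective $\exp_m$ correctly dispatched), Riesz representation to produce $K_n\in\mathcal{B}_P$, basis expansion of $K_n$, then Parseval plus Cauchy--Schwarz in $\ell^2$. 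Your version has two concrete advantages: it actually proves the stated absolute-convergence bound $\sum_j|e_j(m)\overline{e_j(n)}|\le\sqrt{K_P(m,m)K_P(n,n)}$, which the paper leaves implicit (inheriting it from the ambient kernel without comment), and it never needs to extend $\{e_j\}$ to a basis of all of $\mathcal{B}$. What the paper's route buys in exchange is the identification of $K_P$ as the restriction of the orthogonal projection $\mathcal{B}\to\mathcal{B}_P$, which is the structural fact used later when the infinitesimal propagator is built from the ambient kernel. Both arguments are sound; yours is the more self-contained of the two.
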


\begin{proof}
Given that $\mathcal{B}_P$ is a Hilbert subspace of $\mathcal{B}$, we have the following direct sum 
\begin{equation}
\mathcal{B} = \mathcal{B}_P \oplus \mathcal{B}_P^{\bot}.
\end{equation}
Let $\{f_j\}_{j=1}^\infty$ be an orthonormal basis of $\mathcal{B}$ compatible with this decomposition. Then, there exists a reproducing kernel $K(z,w)$ in this space (see \cite{hall00a}) given by
\begin{equation}
  K(z,w)=\sum_{j=1}^\infty f_j(z)\overline{f_j(w)}.
\end{equation}
Then we have 
\begin{equation}
  K(z,w)=\tilde{K}_P(z,w) + \tilde{K}_P^\bot(z,w).
\end{equation}
Considering that $\tilde{K}_P^\bot(z,w)$ acting on a function $\phi \in \mathcal{B}_P$ does not contribute to the integral, we have the following result
\begin{equation}
\begin{aligned}
\phi(n) &=\int_{T_mP}K(\exp_m^{-1}n,w)\phi(\exp_m w)e^{-\abs{w}^2}\,dw \\
&= \int_{T_mP}\tilde{K}_P(\exp_m^{-1}n,w)\phi(\exp_m w)e^{-\abs{w}^2}\,dw\\
\end{aligned}
\end{equation}
and $K_P(n,q)=\tilde{K}_P(\exp_m^{-1}n,\exp_m^{-1}q)$ is the appropriate reproducing kernel. 
\end{proof}

\subsection{The $S^1$ case} 

Consider the case of a particle whose conf\/iguration space is the unit circle. Then the phase space is a cylinder. The euclidean metric on the circle induces a euclidean 
metric on the cylinder. The system is interesting due to the non trivial topology, in particular it is a non-simply connected manifold. We choose the following coordinates
\begin{equation}
-\pi  < q < \pi  \quad\text{y}\quad -\infty < p < \infty.
\end{equation}
And the following coordinates on the tangent space of the cylinder at the point $(0,0)$, called $x$ and $y$ respectively. That is, we can write a vector as follows
\begin{equation}
V=x\frac{\partial}{\partial q}+y\frac{\partial}{\partial p}.
\end{equation}
The exponential map sends the point $(x,y)$ to the point $q=x \mod 2\pi$ and $p=y$ on the previous coordinate chart. In the tangent space of the cylinder we choose the following complex coordinates
\begin{equation}
\begin{aligned}
z = x - iy\\
\bar{z} = x + iy.
\end{aligned}
\end{equation}

Now we consider the space of holomorphic functions on the cylinder, lifted to the tangent space by the exponential map. They are holomorphic in $z$ and periodic in the real part, $x$ coordinate. Then the scalar product in the space of holomorphic functions on the cylinder is
\begin{equation}\label{lift}
\left<\phi,\psi\right> = \frac{1}{2\pi} \int_\mathbb{C}\overline{\phi(z)}\psi(z)e^{-z\bar{z}} dz.
\end{equation}

This space is similar to the Segal-Bargmann space, moreover, the functions are periodic in the real part. In view of theorem~\ref{hilbert}, it is a Hilbert subspace of $\mathcal{B}$. Actually, in the expression \eqref{lift}, the functions $\phi$ and $\psi$ in the integral are the functions $\phi$ and $\psi$ lifted to the tangent space by the exponential map respectively. For simplicity we call they the same way. With this scalar product the set of square integrable holomorphic functions on the cylinder is a Hilbert space. We call it $\mathcal{H}_P$.  

A function in $\mathcal{H}_P$ has the following form
\begin{equation}\label{ff}
\psi(z) = \sum_{n=-\infty}^\infty c_n e^{inz}.
\end{equation}

The functions 
\begin{equation}
\phi_n(z)=e^{inz} = e^{in(x-iy)},\quad n\in\mathbb{Z}
\end{equation}
are periodic in the coordinate $x$ with period $2\pi.$
The set $\{\phi_n\}$ is a non-orthogonal basis of $\mathcal{H}_P$, in fact
\begin{equation}
\left<\phi_p,\phi_q\right> = e^{pq}.
\end{equation}

Considering that $\|\phi_p\|=e^{p^2/2},$ then we have the normalized functions $\tilde{\phi}_p$
\begin{equation}
\tilde{\phi}_p(z) = \frac{\phi_p(z)}{\|\phi_p\|} = e^{ipz-p^2/2}.
\end{equation}
With these functions, the scalar product is
\begin{equation}
\left<\tilde{\phi}_p,\tilde{\phi}_q\right> = e^{-(p-q)^2/2}.
\end{equation}

\subsubsection{Reproducing kernel}

It can be directly verif\/ied that the reproducing kernel is
\begin{gather*}
K(z,\overline{w})=\frac{1}{2\pi}\int_{-\pi}^\pi\frac{\rho_t^z(x)\rho_t^{\overline{w}}(x)}{\rho_t^{x_0}(x)}\mathrm{d} x,
\end{gather*}
where
\begin{gather*}
\rho_t^z(x)=\frac{1}{2\pi}\sum_{k \in\mathbb{Z}}e^{ik(x-z)-k^2\frac{t}{2}}
\end{gather*}
is the solution of the heat equation on $S^1$ with $t=1$ \cite{grigoryan06,grigoryan09}.

That is, given a function $\phi(z)$ as in \eqref{ff}, it verif\/ies \eqref{rkp}, i.\ e.

\begin{equation}
 \phi(z)=\sum_{k\in \mathbb{Z}}\phi_ke^{ikz} = \int_{\mathbb{C}}K(z,\overline{w})\phi(w)e^{-|w|^2}dw
\end{equation}

Indeed,
\begin{equation}
\begin{split}
\phi(z)&=\frac{1}{(2\pi)^4}\int_{\mathbb{C}}\int_{-\pi}^{\pi}\frac{1}{\rho_1^{x_0}(x)}\sum_{k,l,m\in\mathbb{Z}}e^{-ik(x-z)-\frac{k^2}{2}}e^{il(x-\overline{w})-\frac{l^2}{2}}\phi_me^{imw-w\overline{w}}dw\\
&= \sum_{k\in \mathbb{Z}}\phi_ke^{ikz}
\end{split}
\end{equation}

It is interesting to note that the reproducing kernel $K(z,\overline{w})$ coincides with that found in \cite{capobiancoreartes2}, although in that paper the reproducing kernel is found using a generalized Segal-Bargmann transform \cite{hall00a,hall94,hall97a,hall00b,hallkirwin11,hall05}
and the scalar product in that case is def\/ined by an integration in the manifold, in which integrand the solution of the heat equation explicitly appears.

\subsubsection{Ladder Operators}

Now we can def\/ine the ladder operators $a^{+}$ and $a$ (creation operator and annihilation operator respectively).

Given an orthonormal basis $\{\beta_i\}$ of $\mathcal{H}_P$, we have the following expression for the reproducing kernel
\begin{equation}
K(z,\bar{w}) = \sum_{n=0}^{\infty} \beta_n(z)\overline{\beta_n(w)} = \overline{\zeta_{\bar{z}}(w)}. 
\end{equation}

And the operators are the following
\begin{equation}
a^{+} \psi(z) = \left<\zeta_{\bar{z}},w\psi\right>_{L^2} =\frac{1}{2\pi } \sum_{n=0}^{\infty} \beta_n(z)\int_{\mathbb{C}}\overline{\beta_n(w)} w \psi(w)e^{-w\bar{w}}dw,
\end{equation}
and
\begin{equation}
a \chi(z) = \left<\zeta_{\bar{z}},\frac{d\chi}{dw}\right>_{L^2} =\frac{1}{2\pi} \sum_{n=0}^{\infty} \beta_n(z)\int_{\mathbb{C}}\overline{\beta_n(w)} \frac{d\chi}{dw}(w)e^{-w\bar{w}}dw = \frac{d\chi}{dz}(z).
\end{equation}

The functions $\psi$ and $\chi$ are given by the following series
\begin{equation}
\psi(z) = \sum_{p=0}^{\infty} u_p\beta_p(z), \quad\text{}\quad \chi(z) = \sum_{q=0}^{\infty} v_q\beta_q(z).	
\end{equation}

Then
\begin{eqnarray}	
\left< a^{+} \psi,\chi \right> &=& \frac{1}{(2\pi)^2} \sum_{n=0}^{\infty} \int_{\mathbb{C}} \overline{w\psi(w)} \beta_n(w) e^{-w\bar{w}}dw \int_{\mathbb{C}} \overline{\beta_n(z)}\chi(z)e^{-z\bar{z}} dz \nonumber\\ 
&=& \frac{1}{(2\pi)^2} \sum_{n=0}^{\infty} \int_{\mathbb{C}} \overline{\psi(w)} \frac{d\beta_n}{dw}(w) e^{-w\bar{w}}dw\int_{\mathbb{C}} \overline{\beta_n(z)}\chi(z)e^{-z\bar{z}} dz, 
\end{eqnarray}
where the last expression was obtained integrating by parts.
Then, using the series expansion of the functions $\psi$ and $\chi$, we obtain
\begin{eqnarray}
\left< a^{+} \psi,\chi \right> &=& \frac{1}{(2\pi)^2} \sum_{n,p,q=0}^{\infty}  \bar{u}_p v_q \int_{\mathbb{C}} \frac{d\beta_n}{dw}(w) \overline{\beta_p(w)} e^{-w\bar{w}} dw \int_{\mathbb{C}} \overline{\beta_n(z)} \beta_q(z) e^{-z\bar{z}} dz  \nonumber\\ 
&=& \frac{1}{2\pi} \sum_{p,q=0}^{\infty}  \bar{u}_p v_q \int_{\mathbb{C}} \frac{d\beta_q}{dw}(w) \overline{\beta_p(w)} e^{-w\bar{w}} dw,
\end{eqnarray}
where we use the orthonormality of $\beta_i$'s.

Also
\begin{equation}
\left< \psi,a\chi \right> =  
\frac{1}{(2\pi)^2} \sum_{n=0}^{\infty} \int_{\mathbb{C}} \overline{\beta_n(w)} \frac{d\chi}{dw}(w) e^{-w\bar{w}}dw\int_{\mathbb{C}} \overline{\psi(z)}\beta_n(z)e^{-z\bar{z}} dz. 
\end{equation}
where we use again the series expansion of $\psi$ and $\chi$,
\begin{eqnarray}
\left< \psi,a\chi \right> &=& \frac{1}{(2\pi)^2} \sum_{n,p,q=0}^{\infty}  \bar{u}_p v_q \int_{\mathbb{C}} \frac{d\beta_q}{dw}(w) 
\overline{\beta_n(w)} e^{-w\bar{w}} dw \int_{\mathbb{C}} \overline{\beta_p(z)} \beta_n(z) e^{-z\bar{z}} dz \nonumber\\
&=& \frac{1}{2\pi} \sum_{p,q=0}^{\infty}  \bar{u}_p v_q \int_{\mathbb{C}} \frac{d\beta_q}{dw}(w) \overline{\beta_p(w)} e^{-w\bar{w}} dw. 
\end{eqnarray}
Then we have the following result
\begin{equation}
\left< a^{+} \psi,\chi \right> = \left< \psi,a\chi \right>,
\end{equation}
i.\ e., $a^{+}$ and $a$ are adjoint operators.

\section{Path Integral}

Let $Q$ be a Riemannian manifold, connected and geodesically complete. Let $P$ be the cotangent bundle of $Q$.
In section (\ref{metricalevantada}) we shown that there is a natural lifting of the metric $\sigma$ in $Q$ to a metric $G$ in $P$ (see \eqref{metricG}).
Also, every simplectic manifold is almost-K\"ahler, then $P$ admits a compatible almost-complex structure $J$. 
We choose $J$ such that the metric $G=\omega(\cdot,J\cdot)$ coincides with the lifting of the metric in $Q$.  If $J$ is integrable, $P$ is a K\"ahler manifold.

Now we study the temporal evolution of the quantum wave function. 
Let $K\colon P\times P\to\mathbb{C}$ be the reproducing kernel, then  
\begin{equation}
\phi(m) = \int_{T_mP}K(m,\exp_m z)\phi(\exp_m z) e^{-|z|^2}\,dz.
\end{equation} 

Let $H$ be the hamiltonian operator in $\mathcal{H}L^2(P),$ represented by the integral kernel $K_H$, i.\ e. 
\begin{equation}
H\phi(m) = \int_{T_mP}K_H(m,\exp_m z)\phi(\exp_m z) e^{-|z|^2}\,dz.
\end{equation} 

According to the solution of the Schr\"odinger equation, the evolution operator associated to time-independent hamiltonians is
\begin{equation}
U_t=e^{-iHt}.
\end{equation}

For small time intervals we have the following approximation
\begin{equation}
U_\Delta \cong 1-iH\Delta.
\end{equation}
Then, we def\/ine the inf\/initesimal evolution operator as follows, \cite{capobiancoreartes2}
\begin{equation}
u_\Delta\phi(m) = \int_{T_mP}K(m,\exp_m z)\phi(\exp_m z) e^{-|z|^2}e^{-i\Delta K_H/K}\,dz.
\end{equation}

\begin{multline}\label{uu}
u_\Delta(u_\Delta\phi)(m) = \int_{T_mP}K(m,\exp_m z)\int_{T_{(\exp_mz)}P}K(\exp_m z,\exp_{(\exp_m z)}z_2) \\
\phi(\exp_{(\exp_m z)}z_2)
e^{-i\Delta K_H(\exp_m z,\exp_{(\exp_m z)}z_2)/K(\exp_m z,\exp_{(\exp_m z)}z_2)}\\
e^{-i\Delta K_H(m,\exp_m z)/K(m,\exp_m z)}e^{-|z_2|^2}\,dz_2 e^{-|z|^2}\,dz.
\end{multline}

Now, we use the following notation 

$$
\exp_m z=m_2
$$

$$
\exp_{(\exp_m z)}z_2=m_3
$$

$$
\exp_{(\exp_{(\exp_m z)}z_2)}z_3=m_4
$$

In that way we can def\/ine in general $m_j$. It is important to note that $m_j=m_j(z_{j},z_{j-1},...,z_2,z)$. 
This allows us to write (\ref{uu}) as follows

\begin{multline}
u_\Delta(u_\Delta\phi)(m) = \\
\int_{T_mP}\!\!\!\!K(m,m_2)\int_{T_{m_2}P}\!\!\!\!K(m_2,m_3) \phi(m_3)
e^{-i\left(\frac{\Delta K_H(m_2,m_3)}{K(m_2,m_3)}+\frac{\Delta K_H(m,m_2)}{K(m,m_2)}\right)} e^{-|z_2|^2}\,dz_2e^{-|z|^2}\,dz.
\end{multline}

Then, after $n$ compositions, we have

\begin{equation}
u_\Delta^n\phi(m) = 
\int_{T_mP}...\int_{T_{m_n}P} \phi(m_{n+1}) e^{-i\sum_{i=1}^{n}\frac{\Delta K_H(m_i,m_{i+1})}{K(m_i,m_{i+1})}}\prod_{i=1}^{n}K(m_i,m_{i+1})e^{-|z_i|^2}\,dz_i.
\end{equation}

Where in the last equation we identify $z$ as $z_1$ and $m$ as $m_1$ respectively.
The composition of these inf\/initesimal evolution operators allows us to obtain the evolution operator for f\/inite time $t$ as 
\begin{equation}
U_t\phi(m) = \lim_{n\to\infty}(u_{t/n})^n\phi(m).
\end{equation}

It is important to mention that the tangent space corresponding to a f\/lat manifold is also a f\/lat manifold, so in our evolution proposal we only deal with euclidean integrals 
in each step of the iteration.

Specif\/ically, the six $3$-dimensional orientable compact Euclidean space forms are the following quotient spaces $\mathbb{R}^3/\Gamma_i$, $i=1,\dots,6$ 
(see \cite[p.~117]{wolf06} and \cite[p.~302]{kuhnel06}). For example, for the torus $T^3$ (which is constructed by identifying the opposite 
faces of a parallelepiped by translations), $\Gamma_1$ is generated by three translations $t_1$, $t_2$, $t_3$, in the direction of three linear independent 
vectors. And $\Gamma_3$ is generated by $\Gamma_1$ and a screw motion $\alpha^3=t_3$. 

For instance, when the conf\/iguration space is the manifold $\mathbb{R}^3/\Gamma_3$, $P=T^*\left(\mathbb{R}^3/\Gamma_3\right)$ and

\begin{multline}\label{R3G3}
u_\Delta^n\phi(m) = \\
\int_{T_mT^*\left(\mathbb{R}^3/\Gamma_3\right)}\!\!\!\!\!\!...\int_{T_{m_n}T^*\left(\mathbb{R}^3/\Gamma_3\right)}\!\!\!\! \phi(m_{n+1}) e^{-i\sum_{i=1}^{n}\frac{\Delta K_H(m_i,m_{i+1})}{K(m_i,m_{i+1})}}\prod_{i=1}^{n}K(m_i,m_{i+1})e^{-|z_i|^2}\,dz_i,
\end{multline}
where $z_i \in \mathbb{C}^3$ and $\phi(m)$ is a function in $\mathbb{C}^3$ that preserves the symmetry of the euclidean space form and 
is invariant under the action of $\Gamma_3$.

\section{Discussion}

Recently, the research on f\/lat Riemannian manifolds has shown many developments \cite{miatellopodesta06,miatellorossetti09,ellis71,levin02,levin98,levin98_2,levin98_3}.
For instance, a quantization scheme for {\em euclidean space forms} based on path integrals is developed in \cite{capobiancoreartes2}.

In the present paper we make a dif\/ferent and original proposal to the quantization of f\/lat manifolds. It is interesting to note that it shares some results (such as the reproducing kernel) with the scheme studied in \cite{capobiancoreartes2}.

Here we study the holomorphic quantization of a quantum state whose configuration space is a f\/lat Riemannian manifold. The Hilbert spaces are obtained using the exponential map. To def\/ine the scalar product in this Hilbert space we make use of the exponential map, 
which allows us to perform an integration in the tangent space of the cotangent. Also the Feynman integral is developed via the exponential map. The existence of a reproducing kernel allows us to study the temporal evolution of the quantum wave function. As an advantage, the calculations are simpler. In \cite{schulman}, for example, the author studies the Feynman integral in non-simply connected conf\/iguration spaces and the obtained Green function must be corrected in order to include all the paths.
In the case of $S^1$, the method proposed in our paper naturally takes into account all the paths.

Finally, in the last section we apply our quantization method to the case when the conf\/iguration space is the manifold $\mathbb{R}^3/\Gamma_3$, and we show the evolution operator in (\ref{R3G3}) as a composition of inf\/initesimal evolution operators.

\subsection*{Acknowledgements}

This work was supported by the Universidad Nacional del Sur (Grants PGI 24/L096 and PGI 24/ZL10).

\end{document}